\documentclass{article}

\usepackage{microtype}
\usepackage{graphicx}
\usepackage{subcaption}
\usepackage{booktabs} 

\usepackage{hyperref}

\usepackage[preprint]{icml2026}
\makeatletter
\newcommand{\RETURN}[1]{\STATE \textbf{return} #1}
\makeatother

\usepackage{amsmath}
\usepackage{amssymb}
\usepackage{mathtools}
\usepackage{amsthm}
\usepackage{bm}
\usepackage{bbm}

\usepackage{tikz}
\usepackage{tikz-qtree}
\usetikzlibrary{positioning,arrows.meta}

\usepackage{subcaption}

\usepackage{pgfplots}
\usepgfplotslibrary{groupplots}

\pgfplotsset{/pgfplots/group/horizontal sep=2cm}
\pgfplotsset{/pgfplots/group/vertical sep=2cm}

\pgfkeys{/pgfplots/scale/.style={
  x post scale=#1,
  y post scale=#1,
  z post scale=#1}
}

\usepackage{amsthm}
\usepackage{mdframed}

\definecolor{theoremcolor}{rgb}{0.95, 0.95, 0.95}
\mdfsetup{
	innertopmargin=8pt,
	innerbottommargin=8pt,
    backgroundcolor=theoremcolor,
	linewidth=0pt,
}
\theoremstyle{definition}

\newmdtheoremenv{definition}{Definition}[section]
\newmdtheoremenv{proposition}{Proposition}[section]
\newmdtheoremenv{theorem}{Theorem}[section]
\newmdtheoremenv{example}{Example}[section]

\icmltitlerunning{Fast and General Automatic Differentiation for Finite-State Methods}

\begin{document}

\twocolumn[
\icmltitle{Fast and General Automatic Differentiation for Finite-State Methods}



\icmlsetsymbol{equal}{*}

\begin{icmlauthorlist}
	\icmlauthor{Lucas Ondel Yang}{a}
	\icmlauthor{Tina Raissi}{b}
	\icmlauthor{Martin Kocour}{c}
	\icmlauthor{Pablo Riera}{d}
	\icmlauthor{Caio Corro}{e}
\end{icmlauthorlist}

\icmlaffiliation{a}{Université Paris-Saclay, CNRS, Laboratoire~Interdisciplinaire~du~Numérique, Orsay, France}
\icmlaffiliation{b}{RWTH Aachen University, Aachen, Germany}
\icmlaffiliation{c}{Brno University of Technology, Faculty of Information Technology, Brno, Czech Republic}
\icmlaffiliation{d}{Instituto de Investigación en Ciencias de la Computación (ICC), CONICET-UBA, Argentina}
\icmlaffiliation{e}{INSA Rennes, IRISA, CNRS, Université de Rennes, Rennes, France}

\icmlcorrespondingauthor{Lucas Ondel Yang}{lucas.ondel@cnrs.fr}

\icmlkeywords{Machine Learning, ICML}

\vskip 0.3in
]



\printAffiliationsAndNotice{}  

\begin{abstract}
We propose a new method, that we coined the ``morphism-trick'', to integrate custom implementations of vector-Jacobian products in automatic differentiation softwares, applicable to a wide range of semiring-based computations.
Our approach leads to efficient and semiring-agnostic implementations of the backward pass of dynamic programming algorithms.
For the particular case of finite-state methods,
we introduce an algorithm that computes and differentiates the $\oplus$-sum of all paths' weight of a finite-state automaton.
Results show that, with minimal effort from the user, our novel library allows computing the gradient of a function w.r.t. to the weights of a finite state automaton orders of magnitude faster than state-of-the-art automatic differentiation systems.
Implementations are made available via an open-source library distributed under a permissive license.
\end{abstract}

\section{Introduction}
\label{sec:intro}


Most machine learning algorithms rely on the optimization of a cost function. 
When the data exhibits some kind of organisation, it is often desirable to design functions capturing the underlying structure of the data or enforcing a priori adequate constraints. The problem or structured prediction (or structured learning) is the problem of designing and optimizing such functions. 

A broad family of structured prediction methods \citep{Goodman1999, Mohri2002Semiring, Rush2020, Ondel2022} is concerned with optimizing functions that are decomposable into a finite number of terms via two abstract operators, a ``multiplication'' ($\otimes$), and an ``addition'' ($\oplus$) forming a semiring.
Such functions are conveniently represented as a directed acyclic graph, facilitating the exploration of the structure, and they are efficiently evaluated via Dynamic Programming (DP) \cite{bellman1952dp}. 
From a formal perspective, this family of finitely semiring-decomposable functions can be expressed in a unified way via the framework weighted finite state automata \citep{Mohri2002WFST}. This formalism has enabled the development of general, efficient, and scalable softwares \citep{Mohri2000,Riley2009} for DP algorithms on graphs for structured prediction problems.

However, DP algorithms, and semiring-abstracted computations in general, raises major challenges from the perspective of automatic differentiation. Because the semiring is user-defined and unknown beforehand, room for optimization is strongly limited.
As a result, while it is technically possible to automatically differentiate semiring-based computations with tools such as \citep{Innes2018}, the computational cost is prohibitive.
This is a major obstacle in the integration of large scale structured prediction methods in end-to-end pipelines which rely on gradient descent for optimisation method.  

This work proposes the following contributions.
\begin{enumerate}
    \item We introduce the ``morphism-trick'': if the semiring's $\oplus$-addition is isomorphic to the addition of real numbers, the computation graph to backpropagate the derivatives can be ``flattened'' and a vector-Jacobian product algorithm can be implemented without knowing the semiring beforehand.
    \item We show that the morphism-trick can be a applied to a broad family of semirings, including multi-valued semiring and the tropical semiring.
    \item Using the morphism-trick and the WFSA formalism, we propose a general and efficient DP algorithm and its corresponding vector-Jacobian product. An implementation of this algorithm is provided in an open-source software.  
\end{enumerate}

\section{Related Works}
\label{sec:relwork}

To alleviate the computational load of direct AD of semiring-based linear algebra operations, they implement optimized functions and the corresponding vector-Jacobian product for a group of carefully chosen semirings.
This necessary optimization limits the flexibility of their approach: a new structured prediction algorithm requiring a different semiring involves rewriting and optimizing key operations and their corresponding vector-Jacobian product.
Finally, since they use exclusively dense operations their approach is not suited for finite state automata which are inherently sparse objects.

Differentation of finite state automata has been explored by \citet{Hannun2020} and \citet{Hannun2021} where they focus exclusively on the log-semiring and, to a lesser extent, on the tropical semiring.
In a similar vein, the k2-fsa software \citep{k2} proposes a practical implementation of vector-Jacobian products for finite state automata operations limited to the log-semiring.
\citet{balakrishnan2024differentiable} proposes to directly use AD tools to differentiate through finite state automata computation over arbitrary differentiable semirings.
However, they illustrate their approach on small problems (a 5-states and a 3-states automaton) far from the scale seen in, say, natural language processing.

Another line of work focuses on latent structures in neural networks \cite{niculae2025latentstructure}, i.e. structures that appear as intermediate computation inside a neural network \cite{kim2017structured}.
To this end, several methods have been proposed, including differentiating through regularized dynamic programming algorithms \cite{mensch2018,corro2019latent}, efficient explicit marginalization via sparse distributions \cite{niculae2018,Correia2020}, general frameworks based on perturbed optimization \cite{berthet2020,stewart2023} and implicit differentiation \cite{blondel2022}, among others.

Finally, a concept of derivatives \citep{Brzozowski1964} and partial derivatives \citep{Antimirov1996} for regular expressions and finite state automata has already been defined and extended to weighted finite state automata \citep{Lombardy2005}.
Despite a resemblance with the present work, this is merely homonymy: we aim at computing the rate of change of a function with respect to some parameters whereas their concept of derivatives aims at providing a canonical representation of (weighted) regular expressions.

\section{Preliminaries}
\label{sec:prelim}


In this section, we introduce relevant terms, concepts, and notations used throughout this work. 


A {\bf monoid} \citep{Perrin1997} $(M, \odot, \epsilon)$ is an algebraic structure over a non-empty set $M$ where $\odot: M \times M \to M$ is an associative binary operation, that is for all $x, y, z \in M$:
\begin{align*}
	(x \odot y) \odot z &= x \odot (y \odot z), \\
\intertext{and $\epsilon \in M$ is a neutral element, that is $\forall x \in M$:}
	x \odot \epsilon &= \epsilon \odot x = \epsilon .
\end{align*}
A monoid $M$ is said to be \emph{commutative} if $\odot$ is a commutative operator, \emph{i.e.}, \(\forall x, y \in M: x \odot y = y \odot x\).

A {\bf monoid morphism} (or simply morphism) is a mapping $\mu$ from a monoid $(M, \odot, \epsilon)$ to another monoid $(M', \bullet, \epsilon')$ such that for all $x, y \in M$:
\begin{align}
	\nonumber \mu(x \odot y) &= \mu(x) \bullet \mu(y) \\
	\nonumber \mu(\epsilon) &= \epsilon'.
\end{align}
A monoid morphism $\mu$ is an \emph{isomorphim} if $\mu$ is bijective.
The monoids $(M, \odot, \epsilon)$ and $(M', \bullet, \epsilon')$ are said to be \emph{isomorphic} if they are linked by an isomorphism.

A {\bf semiring} \citep{Golan1999} $(S, \oplus, \otimes, \bar{0}, \bar{1})$ is an algebraic structure over a set $S$ where $(S, \oplus, \bar{0})$ is a commutative monoid and $(S, \otimes, \bar{1})$ is a monoid such that $\otimes$ distributes over $\oplus$, that is for all $x, y, z \in S$:
\begin{align*}
	z \otimes (x \oplus y) &= (z \otimes x) \oplus (z \otimes y) \\
	\text{and}~~(x \oplus y) \otimes z &= (x \otimes a) \oplus (y \otimes z).
\end{align*}
Moreover, $\bar 0$ is an absorbing element with respect to $\otimes$, \emph{i.e.}, \(\forall x \in S: x \otimes \bar{0} = \bar{0} \otimes x = \bar{0}\).

The {\bf matrix semiring} is the semiring formed by the set of square matrices $S^{k \times k}$, with addition as the elementwise $\oplus$-addition of $S$, and  multiplication as the standard matrix multiplication defined with $\oplus$-addition and $\otimes$-multiplication of $S$.
When the context is clear, we use the standard notation, i.e., for $\mathbf{A}, \mathbf{B} \in S^{k \times k}$, the semiring-matrix addition is noted $\mathbf{A} + \mathbf{B}$, and the semiring-matrix multiplication is noted $\mathbf{A} \mathbf{B}$.
The additive neutral element is the matrix noted $\mathbf{0}$ with all entries set to $\bar{0}$, and the multiplicative neutral element is the matrix noted $\mathbf{I}$ with diagonal elements set to $\bar{1}$ and the others set to $\bar{0}$. We note $\mathbf{A}^n$ the $n$-iterated multiplication of $\mathbf{A}$ and we define $\mathbf{A}^0 \triangleq \mathbf{I}$.



A {\bf weighted finite state automaton} (WFSA) is a 6-tuple $(S, Q, \Sigma, E, \lambda, \rho)$ consisting of a semiring $(S, \oplus, \otimes, \bar{0}, \bar{1})$, a finite set of states $Q$, a finite set of label $\Sigma$, a finite set of transition $E \subseteq Q \times Q \times \Sigma \times S$, an initial mapping function $\lambda : Q \mapsto S$,  and a final mapping function $\rho : Q \mapsto S$.

For a transition $e \in E$, we write $o(e)$ its origin state, $d(e)$ its destination state, $\sigma(e)$ its label and $w(e)$ its weight.

We define the set of initial (resp. final) states $I$ (resp. $F$) as the set of states for which the mapping $\lambda$ (resp. $\rho$) is not $\bar{0}$, i.e. $I = \{q : \lambda(q) \neq \bar{0}, q \in Q \}$ (resp. $F = \{q | \rho(q) \neq \bar{0}, q \in Q \}$).

A path $p = (e_1, ..., e_n) \in E^n$is a sequence of  transition such that, for all $1 \leq i \leq n$, we have $d(e_i) = o(e_{i+1})$.
The set of paths from state $q \in Q$ to state $r \in Q$ is noted $P(q, r)$.
The weight of a path is defined as:
\begin{align*}
    w(p) &= w(e_1) \otimes ... \otimes w(e_n).
\end{align*}

The {\bf shortest-distance} between $q \in Q$ and $r \in Q$ is defined as:
\begin{align*}
    D[q,r] = \bigoplus_{p \in P(q,r)} w(p).
\end{align*}
The shortest-distance between two states can be evaluated efficiently via DP using the following recursion:
\begin{align}
    \label{eq:SD_DP} D[q, r] = \bigoplus_{x \in P(s,r)} D[q,x] \otimes D[x,r].
\end{align}

The {\bf weight of an automaton} $\mathcal{A}$ is defined as
\begin{align}
    \label{eq:W_A} \nu(\mathcal{A}) &= \bigoplus_{(q,r) \in I \times F} \lambda(q) \otimes D[q,r] \otimes \rho(q).
\end{align}
 
The concept of weight of an automaton is central in structured prediction as it formalises the notion of cost function in structured prediction problems representable as a directed acyclic graph (assuming an acylic automaton $\mathcal{A}$).
Under various disguises, it is the likelihood function of many probabilistic models (e.g. Hidden Markov Models \citep{Rabiner1989}, Conditional Random Fields \citep{Lafferty2001}, ...), it is a special instance of the sum-product algorithm in factor graphs \citep{Kschischang2001}, and it is a key computation in most, if not all, structured loss functions for speech processing \citep{Graves2006, Graves2012,Heigold2012,Hoffmeister2012,Hadian2018,Pratap2022,Ehsan2022,Laptev2023}.

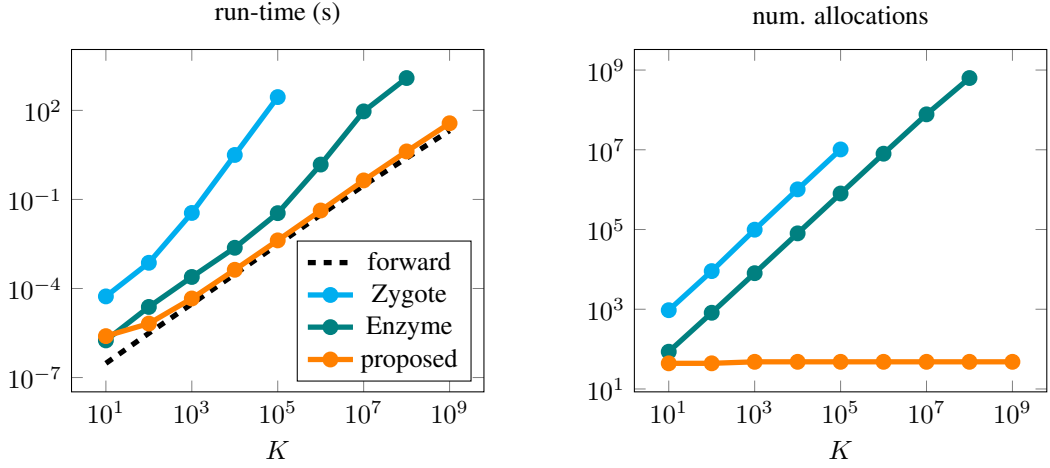
\begin{figure*}[t]
	\centering
	\begin{tikzpicture}
		\begin{groupplot}[
			group style = {
				group size=2 by 1
			},
			scale = 0.8,
			xlabel = {$K$},
			xmode = {log},
			ymode = {log},
			legend pos = {south east}
			]
			
			\nextgroupplot[title=run-time (s)]
			\addplot[black, style=dashed, line width=2pt] table[x=N, y=forward] {autodiff_benchmark_log32.txt};
			\addlegendentry{forward};
			
			\addplot[cyan, mark=*, style=solid, line width=2pt] table[x=N, y=zygote] {autodiff_benchmark_log32.txt};
			\addlegendentry{Zygote};
			
			\addplot[teal, mark=*, style=solid, line width=2pt] table[x=N, y=enzyme] {autodiff_benchmark_log32.txt};
			\addlegendentry{Enzyme};
			
			\addplot[orange, mark=*, style=solid, line width=2pt] table[x=N, y=rule] {autodiff_benchmark_log32.txt};
			\addlegendentry{proposed};
			
			\nextgroupplot[title=num. allocations]
			\addplot[black, style=dashed, line width=2pt] table[x=N, y=forward] {autodiff_numalloc_log32.txt};
			
			\addplot[cyan, mark=*, style=solid, line width=2pt] table[x=N, y=zygote] {autodiff_numalloc_log32.txt};
			
			\addplot[teal, mark=*, style=solid, line width=2pt] table[x=N, y=enzyme] {autodiff_numalloc_log32.txt};
			
			\addplot[orange, mark=*, style=solid, line width=2pt] table[x=N, y=rule] {autodiff_numalloc_log32.txt};
		\end{groupplot}
	\end{tikzpicture}
	\caption{
		\textbf{(left)} Run-time of the  product under the log-semiring (forward) and the AD of the computation with Zygote \citep{Innes2018}, Enzyme \citep{Moses2020} and Zygote augmented with a generic vector-jacobian product (our proposed method).
		Our approach scales well as it bypasses the needs to allocate and maintain the computational graph in memory.
		\textbf{(right)} Number of heap allocations realized by the different AD methods.
		Note that the computation of the semiring dot-product itself does not require dynamic memory allocations.
	}
	\label{fig:benchmark}
\end{figure*}

\section{Automatic Differentiation of the Semiring Dot Product} 
\label{sec:semiringdiff}

Differentiating through semiring-based operations is very challenging when the
semiring is unknown to the AD system. The problem does not come from evaluating
the derivatives themselves but rather from scaling the computation.
Even if the function to differentiate is efficiently implemented, the process of differentiation quickly becomes inefficient if not helped with specialized implementation of relevant vector-Jacobian products. 
We dissect here the steps of differentiating through semiring-based computations to illustrate this issue. 
Our discussion assumes the use of \emph{reverse-mode} differentation
\cite{Rall2006,Atilim2018} where the function to differentiate is first
evaluated for a given input (\emph{forward} step) and then the derivatives are
backpropagated through the computation graph (\emph{backward} step) via
iterated vector-Jacobian products.

Let $(S, \oplus, \otimes, \bar{0}, \bar{1})$ be a semiring with
$\oplus$ and $\otimes$ differentiable with respect to both arguments. We assume
the following derivatives
\[
	\frac{\partial x \otimes y}{\partial x},\quad
	\frac{\partial x \otimes y}{\partial y},\quad
	\frac{\partial x \oplus y}{\partial x},\quad\text{and}\quad
	\frac{\partial x \oplus y}{\partial y}
\]
to be computable by the AD system with negligible cost. 

We consider a dot product operation defined for two vectors $\mathbf{x}, \mathbf{y} \in S^K$ as:
\begin{align}
    \label{eq:dot} z = (x_1 \otimes y_1) \oplus (x_2 \otimes y_2) \oplus \dots \oplus (x_K \otimes y_K).
\end{align}
Not only this operation is a typical in semiring-based computation, it is also at the heart of DP algorithms as illustrated by \eqref{eq:SD_DP}.  
A concrete (sequential) implementation of \eqref{eq:dot} requires computing the $\otimes$-products:
\begin{align}
    \label{eq:prod} u_i &= x_i \otimes y_i
\end{align}
and then to perform the $\oplus$-accumulation
\begin{align}
    \label{eq:accum_init} z_{K-1} &= u_{K-1} \oplus u_{K} \\
    \label{eq:accum} z_i &= u_i \oplus z_{i+1}.
\end{align}
The final result is obtained with $z~=~z_1$. 
In this case, assuming constant cost for the $\oplus$ and $\otimes$ operations, the forward computation has a linear complexity $O(K)$.

To compute the partial derivatives $\frac{\partial z}{\partial
x_i}$ one has to unroll the computation by application of the chain rule:
\begin{align}
    \label{eq:backprop} \frac{\partial z}{\partial x_i} &= \underbrace{\frac{\partial z}{\partial z_1} ... \frac{\partial z_{i-1}}{\partial z_i} \frac{\partial z_i}{\partial u_i}}_{\frac{\partial z}{\partial u_i}} \frac{\partial u_i}{\partial x_i}.
\end{align}
An illustration of the computational graph for the forward and backward steps is shown in Fig.~\ref{fig:forward_backward} (Appendix \ref{app:tree}).

Since we assume reverse-mode AD, \eqref{eq:backprop} is calculated from left to
right and corresponds to a top-down traversal of the computational graph. Note that the computation graphs of the forward and backward steps have the same topology, and therefore, both steps have the same algorithmic complexity\footnote{We assume that the $\oplus$ and $\otimes$
operations and their derivatives have the same complexity.} 
and, presumably, a similar run-time.
In practice, however, the run-time between the forward and backward steps of semiring-based computations tends to differ significantly.
We illustrate the issue in Fig.~\ref{fig:benchmark} where we report the run-time of the forward an backward steps of the dot product under the log-semiring \citep{Droste2009}, where $x \oplus y = \log(e^x + e^y)$
and $x \otimes y = x + y$, using two popluar AD softwares of the Julia
programming language ecosystem: Zygote \cite{Innes2018} and Enzyme
\cite{Moses2020}. Zygote benefits from highly optimized implementations of
vector-Jacobian products for linear algebra operations (over the field of
reals), Enzyme has a minimal set of optimized implementations of
vector-Jacobian products, but achieves high-performance by differentiating the
optimize low-level LLVM's intermediate representation of the program. In both
cases the AD process is several orders of magnitude slower than the forward
computation despite having the same theoretical complexity \cite{Griewank2008}. This performance gap is
explained by the need to store in memory the computation graph necessary for the backward step. 
As one can see in Fig.~\ref{fig:benchmark}, the total number
of allocations made by the AD system grows with the depth of the computation tree strongly impacting the run-time.

\section{Differentiation via the ``Morphism-Trick''}
\label{sec:autodiff_morphism}

We have seen that the performance bottleneck in AD of semiring-based computations is related to the depth of the computation graph, which incurs many unavoidable heap allocations.
In this section we describe a method to ``flatten'' the computation graph for the backward step by using a unique implementation of vector-Jacobian products for a broad family of semirings.

\begin{proposition}(Morphism-trick)
\label{prop:derivative}
	Let $(S, \oplus, \otimes, \bar{0}, \bar{1})$ be a semiring
	such that the monoid $(S, \oplus, \bar{0})$ is isomorphic to $(\mathbb R, +, 0)$,
	and let $\mu: S \to \mathbb R$ be the associated morphism and $\mu^{-1}$ its inverse. For all $\mathbf{x}, \mathbf{y} \in S^K$, if 
	\begin{align*}
	    z = (x_1 \otimes y_1) \oplus ... \oplus (x_n \otimes y_n),
	\end{align*}
	then:
	\begin{align*}
    	\frac{\partial z}{\partial x_i} &= \frac{\partial z}{\partial \mu(z)} \frac{\partial \mu(x_i \otimes y_i)}{\partial x_i \otimes y_i} \frac{\partial x_i \otimes y_i}{\partial x_i}.
	\end{align*}
\end{proposition}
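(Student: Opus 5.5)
The plan is to push the whole $\oplus$-accumulation through the isomorphism $\mu$. Since $\mu$ is a monoid isomorphism from $(S,\oplus,\bar 0)$ to $(\mathbb R,+,0)$, the semiring dot product becomes an ordinary real sum:
\begin{align*}
	\mu(z) = \sum_{j=1}^{K} \mu(x_j \otimes y_j), \qquad z = \mu^{-1}\Bigl(\sum_{j=1}^{K} \mu(u_j)\Bigr),
\end{align*}
with $u_j = x_j \otimes y_j$ as in \eqref{eq:prod}. This identity follows from the morphism property $\mu(a \oplus b) = \mu(a) + \mu(b)$ by a straightforward induction on the accumulation steps \eqref{eq:accum_init}--\eqref{eq:accum}. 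Associativity and commutativity of $\oplus$ guarantee that the result does not depend on the order of accumulation.

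With this, $z$ factors as the composition $x_i \mapsto u_i \mapsto \mu(u_i) \mapsto s = \sum_j \mu(u_j) \mapsto \mu^{-1}(s) = z$. The other terms $\mu(u_j)$ for $j \neq i$ do not depend on $x_i$, because $u_j$ involves only $x_j$ and $y_j$. Applying the chain rule along this composition gives
\begin{align*}
	\frac{\partial z}{\partial x_i} = \frac{\partial \mu^{-1}(s)}{\partial s}\,\frac{\partial s}{\partial \mu(u_i)}\,\frac{\partial \mu(u_i)}{\partial u_i}\,\frac{\partial u_i}{\partial x_i},
\end{align*}
and the middle factor $\partial s / \partial \mu(u_i)$ equals $1$. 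Finally, since $s = \mu(z)$, the first factor is by definition $\partial z / \partial \mu(z)$, the derivative of $\mu^{-1}$ evaluated at $\mu(z)$. This gives exactly the claimed formula.

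The main obstacle is the regularity needed to apply the chain rule. We need $\mu$ and $\mu^{-1}$ to be differentiable, and $S$ to carry a differentiable structure; the statement assumes this only implicitly. I would make it an explicit standing assumption, consistent with the paper's assumption that $\oplus$ and $\otimes$ are differentiable, and interpret $\partial z/\partial \mu(z)$ as $(\mu^{-1})'(\mu(z))$. Where this derivative is nonzero, it equals $1/\mu'(z)$ by the inverse function rule.

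As a sanity check of the interpretation, the log-semiring has $\mu = \exp$. The formula then gives $e^{u_i - z}$, which is the familiar softmax weight.
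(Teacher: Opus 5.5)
Your proof is correct and follows the paper's argument essentially step for step. Both rewrite $z = \mu^{-1}(\sum_j \mu(x_j \otimes y_j))$ via the morphism, apply the chain rule, and note that $\partial \mu(z)/\partial(x_i \otimes y_i)$ collapses to $\partial \mu(x_i \otimes y_i)/\partial(x_i \otimes y_i)$ because the other summands do not depend on $x_i$; your explicit factor $\partial s/\partial \mu(u_i) = 1$ is that collapse, and your stated assumption that $\mu$ and $\mu^{-1}$ are differentiable, with $\partial z/\partial \mu(z)$ read as $(\mu^{-1})'(\mu(z))$, spells out what the paper's proof uses implicitly.
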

\begin{proof} Using the morphism, we have 
\begin{align*}
    z &= \mu^{-1}[\mu(z)] \\
    &= \mu^{-1}[\mu(x_1 \otimes y_1) + ... + \mu(x_n \otimes y_n)]
\end{align*}and applying the chain rule gives:
    \begin{align*}
        \frac{\partial z}{\partial x_i} &= \frac{\partial z}{\partial \mu(z)} \frac{\partial \mu(z)}{\partial x_i \otimes y_i} \frac{\partial x_i \otimes y_i}{\partial x_i}.
    \end{align*}
    Observing that $\frac{\mu(z)}{\partial x_i \otimes y_i}$ simplifies to $\frac{\partial \mu(x_i \otimes y_i)}{\partial x_i \otimes y_i}$ leads to the desired result.
\end{proof}
Proposition~\ref{prop:derivative} shows that, under certain conditions, the computation of $\frac{\partial
z}{\partial u_i}$ does not require any more unrolling the recursion in \eqref{eq:accum}. 
From a graph perspective, this implies that the computation graph
for the backward step has a depth of $O(1)$ (see
Figure~\ref{fig:backprop_morphism}) and no longer requires storing the
$z_i$ values saving therefore many heap allocations.


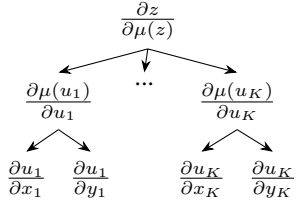
\begin{figure}[t]
	\centering
    \begin{tikzpicture}
	\tikzset{every tree node/.style={align=center,anchor=north}}
	\Tree[.$\frac{\partial z}{\partial \mu(z)}$
	\edge[-{Stealth}]; 
	[.$\frac{\partial \mu(u_1)}{\partial u_1}$
	\edge[-{Stealth}]; 
	$\frac{\partial u_1}{\partial x_1}$
	\edge[-{Stealth}];
	$\frac{\partial u_1}{\partial y_1}$
	]
	\edge[-{Stealth}];
	...
	\edge[-{Stealth}]; 
	[.$\frac{\partial \mu(u_K)}{\partial u_K}$
	\edge[-{Stealth}]; 
	$\frac{\partial u_K}{\partial x_K}$
	\edge[-{Stealth}];
	$\frac{\partial u_K}{\partial y_K}$
	]
	]
\end{tikzpicture}
	
	\caption{
		Computation graph of the backward step of $z$ via the morphism-trick.
	}
	\label{fig:backprop_morphism}
\end{figure}

Using this approach, one can incorporate an optimized implementation of vector-Jacobian products in the AD system valid for any semiring equipped with a morphism $\mu$.
It is sufficient for the end-user to provide to the AD system routines to evaluate the following computations:
\begin{align}
    \label{eq:generic_drule1} \frac{\partial a}{\partial \mu(a)} \frac{\partial \mu(b)}{\partial b} \text{, } \frac{\partial a \otimes b}{\partial a} \text{, and } \frac{\partial a \otimes b}{\partial b}.
\end{align}
An example of a pseudo-algorithm of the vector-Jacobian
product of semiring dot product in provided in Alg.~\ref{alg:dotrule} (Appendix~\ref{app:algorithms}).

As an example, consider the log-semiring and observe
that the morphism $\mu(x) = e^x$ statisfies the constraints in Proposition~\ref{prop:derivative} and is
invertible. It follows that
\begin{align*}
    \frac{\partial a}{\partial \mu(a)} \frac{\partial \mu(b)}{\partial b} &= \frac{e^b}{e^a} \\
    \frac{\partial a \otimes_{\log} b}{\partial a} &= \frac{\partial a \otimes_{\log} b}{\partial b} = 1.
\end{align*}
Using these definitions in Alg.\ \ref{alg:dotrule}, allows to backpropagate the
derivatives without storing the computation graph of the forward step. 

In Fig.~\ref{fig:benchmark}, we measure the run-time of the AD of the semiring dot product using a general implementation of the vector-Jacobian product with Zygote.
In this case, there is no significant difference between the forward step and the backward step: the forward graph is
not stored and the overhead of the AD process is negligible.

We end this section on a practical remark: the morphism $\mu$
allows to shortcut dependencies in the chain of derivatives. As a result, the
implementation of backpropagation can be decoupled from the
semiring, relying on an abstract definition of the functions in
\eqref{eq:generic_drule1}. This mirror the design of finite state automata
libraries, as described in \citet{Mohri2000}, where concepts of object oriented
programming help to separate of graph-related algorithms from the semiring
abstraction of the transitions' weight.

\section{Applicable Semirings}
\label{sec:semirings}

We now present examples of semirings satisfying this constraint, but also possible workarounds to differentiate some idempotent semirings, that is extending the proposed approach to semirings beyond ones strictly satisfying the conditions of Proposition~\ref{prop:derivative}.

\subsection{Semirings Isomorphic to the Real Line}

The proposed approach is applicable when the semiring's monoid $(S, \oplus, \bar{0})$ is isomorphic to $(\mathbb{R}, +, 0)$, where ``$+$'' is the natural addition, under the monoid morphism $\mu: S \mapsto \mathbb{R}$.
The existence of the isomorphism $\mu$ for a semiring $S$ implies the $\oplus$-operation to be defined as:
\begin{align}
    \label{eq:fs_def} x \oplus y = \mu^{-1}[\mu(x) + \mu(y)].
\end{align}
This evidently applies with $\mu(x) = x$ for the semiring of real numbers and with the semiring of complex numbers frequently used in quantum finite-state methods \citep{Say2014}.

\begin{example}[log-semiring]
The log-semiring on the set $\mathbb R$ is defined as follows:
\begin{align*}
    x \oplus y &\triangleq \tau^{-1} \log\left(~\exp(\tau x) + \exp(\tau y)~\right),
    \\
    x \otimes y &\triangleq x + y,
    \\
    \bar 0 &= - \infty \text{~~and~~}\bar 1 = 0,
\\
\intertext{where $\tau > 0$ is a configuration temperature.
The isomorphism is given by:}
    \mu(x) &\triangleq \exp(\tau x)
\end{align*}
\end{example}

The log-semiring is an important example as it appears in the computation of the log-partition function for conditional random fields and hidden Markov models.

More generally, $\kappa$-calculus \citep{Kaniadakis2002} defines deformed exponential and logarithmic function that are basis for power-law distributions:
\begin{align}
    \exp_{\kappa}(x) &= \frac{x^\kappa - x^{-\kappa}}{2\kappa}, \\
    \log_{\kappa}(x) &= \big(\sqrt{1 + \kappa^2 x^2} + \kappa x \big)^{\frac{1}{\kappa}}.
\end{align}
The generalisation of the log-semiring with these functions is compatible with our framework.

\begin{example}[$\log_\kappa$-semirings]
The $log_\kappa$-semiring on the set $\mathbb R$ is defined as follows:
\begin{align*}
    x \oplus_{\kappa} y &= \log_{\kappa} [ \exp_{\kappa}(x) + \exp_{\kappa}(y)],
    \\
    x \otimes_{\kappa} y &= x\sqrt{1 + \kappa^2 y^2} + y\sqrt{1 + \kappa^2x^2},
    \\
    \bar 0 &= - \infty \text{~~and~~}\bar 1 = 0.
\\
\intertext{The isomorphism is given by:}
    \mu(x) &\triangleq \exp_{\kappa}(x)
\end{align*}
The $\log_{\kappa}$-semiring  reduces to the standard log-semiring when $\kappa$ approaches 0.
\end{example}


\subsection{Multi-valued Semiring}

Our approach is also applicable to \emph{multi-valued semirings}, \emph{i.e.}\ semirings for which the set $S$ is a ary tuple of values.
A typical example is the expectation semiring \citep{Eisner2002} and its higher order generalization \citep{Li2009}.
Consider a couple-valued set $S = S_1 \times S_2$ and the couple of binary operations defined as $\oplus = (\oplus_1, \oplus_2)$.
Then, for $(x, a) \in S$ and $(y, b) \in S$, we write:
\begin{align}
    (x, a) \oplus (y, b) &= (x \oplus_1 y, a \oplus_2 b).
\end{align}
Notice that when dealing with multi-valued semirings, the partial derivatives in~\eqref{eq:generic_drule1} are not scalar but Jacobian matrices.
This is not an issue in practice as we do not need to instantiate these matrices explicitly to compute the vector-Jacobian products.

\begin{example}[log-expectation semiring]
The $\log$-expectation semiring is defined on the multi-valued set $\mathbb R \times \mathbb R$
\begin{align*}
    (x, a) \oplus (y, b) &= (\log(e^x + e^y), a + b) \\
    (x, a) \otimes (y, b) &= (x + y, e^x b + e^y a) \\
    \bar 0 &= (-\infty, 0) \text{~~and~~}\bar 1 = (0, 0). \\
\intertext{The isomorphism is given by:}
    \mu[(x, a)] &\triangleq (\exp(x), a)
\end{align*}
\end{example}


\subsection{Idempotent Semirings}

The morphism-trick is not applicable generally on idempotent semirings, i.e., semirings for wich $x \oplus x = x$ for any $x \in S$, as they do not satisfy \eqref{eq:fs_def}.
Nevertheless, some idempotent semirings can be integrated into our framework by considering weaker constraints and augmenting the semiring with other values necessary for the backward step.
Consider the sum
\begin{align}
    z = z_1 \oplus ... \oplus z_N
\end{align}
with $z_i \in S$, and let assume that:
\begin{enumerate}
    \item {
    There is a mapping $f_z(\cdot)$ such that
    \begin{align}
        \label{eq:pseudo_morphism} f_z(z) = f_z(z_1) + ... + f_z(z_N).
    \end{align}
    }
    \item {
        $f_z(\cdot)$ is invertible at $z$.
    }
\end{enumerate}
With these two conditions, it is easy to see that the partial derivative of $z$ with respect to any term of the sum undergoes a simplification similar to the morphism case.
We draw the reader's attention that $f_z$ is not necessarily a morphism and that another $\oplus$-summation, say $z'$, may require different mapping $f_{z'}$, and $f_z(x)$ is not equal to
$f_{z'}(x)$ in general for $x \in S$.
These conditions do not suppose the existence of the isomorphism $\mu$ but require implementing a general vector-Jacobian product for the whole familly of mappings $f_z(\cdot)$.
As an example, we illustrate how the (idempotent) tropical semiring, defined with $x \oplus y = \min(x,y)$, can be differentiated with our framework using these relaxed constraints.
Let $1_m(x)$ be the indicator function which returns $1$ if $x = m$ and $0$ otherwise.
We define the function $f$ parameterized by $m$ and $C$ as:
\begin{align}
    \label{eq:min_pseudo_morphism} f_{m,C}(x) &= \frac{x 1_{m}(x)}{C}.
\end{align}
For each $z~=~\min(z_1, ..., z_n)$ we define $C_z = \sum_{i = 1}^n 1_m(z_i)$, i.e., $C_z$ is the number of argument $z_i$ equals to $z$.
Then, we have the following relationship:
\begin{align}
    \nonumber z &= z_1 \oplus ... \oplus z_N \\
    \nonumber &= \min(z_1, ..., z_n) \\
    &= f_{z,C_z}(z_1) + ... + f_{z,C_z}(z_n).
\end{align}
Therefore, a tropical $\oplus$-summation $z$ can be decomposed as a regular sum using $f_{z,C_z}(\cdot)$ and the latter is invertible (only) at $z$.
Consequently, the backpropagation of tropical computations can be theoretically differentiated with our approach as any other semiring satisfying \eqref{eq:fs_def}.
A difference subsists as the partial derivatives
\begin{align}
    \label{eq:pseudo_min_diff} \frac{\partial z}{\partial z_i} &= \frac{\partial z}{\partial f_{z,C_z}(z)}\frac{\partial f_{z,C_z}(z)}{\partial z_i} = \frac{1_z(z_i)}{C_z}
\end{align}
depend on a parameter $C_z$, wich must be stored for the backward step.
This easily done by considering the following composite semiring
\begin{equation}
    (\mathbb{R} \times \mathbb{R}^+, \oplus, \otimes, (+\infty, 0), (0, 1)),
\end{equation} where
\begin{align}
    (x, C_x) \oplus (y, C_y) &= (\min(x, y), C_x + C_y) \\
    (x, C_x) \otimes (y, C_y) &= (\min(x, y), C_x C_y).
\end{align}
The advantage of integrating the evaluation of the $C_z$ parameters in the semiring computation is that it avoids the need to implement a specific vector-Jacobian product for the tropical semiring.
It is sufficient to use the extended-tropical semiring defined above with the appropriate implementation of \eqref{eq:generic_drule1} and \eqref{eq:pseudo_min_diff}.
A similar treatment applies to any idempotent semiring with $\oplus$-addition defined with $\min$ or $\max$.
This includes, among others, the tropical semiring, the arctic semiring, and the \L{}ukasiewicz semiring \citep{Droste2009}.

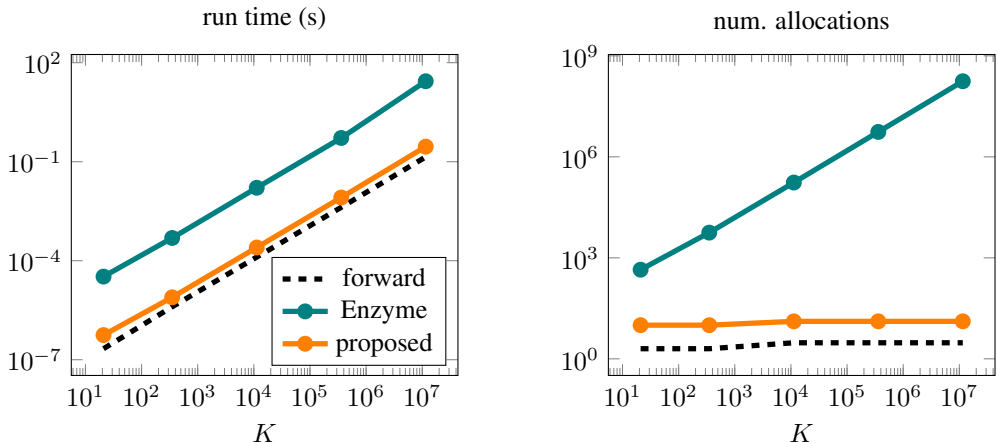
\begin{figure*}[t]
	\centering
	\begin{tikzpicture}
		\begin{groupplot}[
			group style = {
				group size=2 by 1
			},
			scale = 0.75,
			xlabel = {$K$},
			xmode = {log},
			ymode = {log},
			legend pos = {south east}
			]

			\nextgroupplot[title=run time (s)]
			\addplot[black, style=dashed, line width=2pt] table[x=N, y=forward] {autodiff_benchmark_weight_log32.txt};
			\addlegendentry{forward};


			\addplot[teal, mark=*, style=solid, line width=2pt] table[x=N, y=enzyme] {autodiff_benchmark_weight_log32.txt};
			\addlegendentry{Enzyme};

			\addplot[orange, mark=*, style=solid, line width=2pt] table[x=N, y=rule] {autodiff_benchmark_weight_log32.txt};
			\addlegendentry{proposed};

			\nextgroupplot[title=num. allocations]
			\addplot[black, style=dashed, line width=2pt] table[x=N, y=forward] {autodiff_numalloc_weight_log32.txt};


			\addplot[teal, mark=*, style=solid, line width=2pt] table[x=N, y=enzyme] {autodiff_numalloc_weight_log32.txt};

			\addplot[orange, mark=*, style=solid, line width=2pt] table[x=N, y=rule] {autodiff_numalloc_weight_log32.txt};
		\end{groupplot}
	\end{tikzpicture}
	\caption{
		\textbf{(left)} Run-time of computing $\nu(\mathcal{A})$ under the log-semiring (forward) and the AD of the computation, Enzyme \citep{Moses2020} and Zygote augmented with a general vector-jacobian product (proposed).
		\textbf{(right)} Number of heap allocations realized by the different AD methods.
		For both plots, $K$ represents the number of states of plus the number of transitions of $\mathcal{A}$. The automata used for this benchmark were artificially created by iterated concatenations of the automaton illustrated in Fig.~\ref{fig:example_automaton} (Appendix~\ref{app:gradient_examples}).
	}
	\label{fig:weight_forward_backward}
\end{figure*}

\section{Differentiating DP Algorithms with Finite State Methods}
\label{sec:weight}

So far, we have exposed our approach on the generalized dot product, the core operation of DP algorithms for structure prediction problems.
In this section, we derive a general differentiable DP algorithm to compute the weight of an automata $\nu(\mathcal{A})$.
Recall from Section~\ref{sec:prelim} that $\nu(\mathcal{A})$ formalises all possible cost functions for structure prediction problems on directed acyclic graphs.
The DP recursion to evaluate $\nu(\mathcal{A})$, as defined in \eqref{eq:SD_DP}, requires exploring the graph of the structural constraints represented by the automaton $\mathcal{A}$.
This is usually done via a queue-like mechanism, where states to explore are added progressively \citep{Mohri2002Semiring}.
From a software perspective, queue-like mechanisms present major challenges as, (i) they are sequential in nature and cannot easily leverage parallel optimization, and (ii) they are difficult to differentiate through as it is necessary to keep track of the internal state of the queue over time to ensure gradient correctness.
Because of these difficulties, it is common to have task-specific implementation of DP algorithms (and their derivatives),   rather than general ones.
A glaring example is the widely used CTC loss \citep{Graves2006}: common implementations, as in PyTorch \citep{Paszke2019}, rely upon a fixed structure to derive a parallel and optimized implementation.
Therefore, exploring a slightly different structure, as in \citep{Pratap2022}, necessitates the development of new complex softwares \cite{Hannun2020}.

Inspired by modern numerical methods for graphs \citep{Kepner2011}, we present, in this section, a general DP algorithm expressed entirely in terms of semiring linear algebra operations, bypassing the difficulty of the queue-based graph exploration. Then, we apply the morphism-trick on this new algorithm to derive an efficient and general of the corresponding vector-Jacobian product.

\subsection{Computing $\nu(\mathcal{A})$}
Let $\mathcal{A} = (S, Q, \Sigma, E, \lambda, \rho)$ be an acyclic and topologically sorted finite state automaton. We define the matrix $\mathbf{T} \in S^{|Q| \times |Q|}$ such that the element at the $i$th row and $j$th column, denoted $T_{ij}$, is the $\oplus$-sum of all the transitions' weight going from state $i$ to state $j$, i.e.,
\begin{align}
    \label{eq:SUM_TRANS} T_{ij} = \bigoplus_{\sigma \in \Sigma} (\delta_{ij}, \sigma).
\end{align}
We further define the vector $\boldsymbol{\alpha}$ (resp. $\boldsymbol{\omega}$), taking value in $S^Q$, such that $\alpha_i = \lambda(i)$ (resp. $\omega_i = \rho(i)$).
Since $\mathcal{A}$ is topologically sorted, it follows that $\mathbf{T}$ is strictly upper triangular, i.e. $T_{ij}~ =~\bar{0}$ for when $i \geq j$.

Using $\mathbf{T}$, $\boldsymbol{\alpha}$, and $\boldsymbol{\omega}$, the weight of $\mathcal{A}$ can be expressed as:
\begin{align}
     \nu(\mathcal{A}) &= \boldsymbol{\alpha}^\top \mathbf{T}^* \boldsymbol{\omega}, \label{eq:weight_linalg}
\end{align}
where $\mathbf{T}^*$ denotes the star operation in the matrix semiring, \emph{i.e.}, $\mathbf{T}^0 + \mathbf{T}^1 + \mathbf{T}^2 + ...$.
Intuitively, \eqref{eq:weight_linalg} is the sum of the weights of all the paths of length $0$, $1$, $2$, and so on.
The intermediate computation
\begin{align}
    \label{eq:sdistance} \mathbf{d}^\top &= \boldsymbol{\alpha}^\top \mathbf{T}^*
\end{align}
is known as the single-source shortest-distance algorithm \citep{Mohri2002Semiring} and is the bulk of the algorithm's complexity.

To derive an efficient recursive algorithm for $\mathbf{d}$, we make use of the following proposition (adapted from Theorem 2.17 in \citet{Esik2009}):
\begin{proposition}\label{prop:LIN_ED_SOLUTION}
	Let $(S, \oplus, \otimes, \bar{0}, \bar{1})$ be a semiring, $k$ a positive integer, $\boldsymbol{\alpha} \in S^k$ a vector, and $\mathbf{T} \in S^{k \times k}$ a matrix. If $\mathbf{d}^\top = \boldsymbol{\alpha}^\top \mathbf{T}^* $, then, the following equality holds:
	\begin{align*}
	   \mathbf{d}^\top = \boldsymbol{\alpha}^\top + \mathbf{d}^\top \mathbf{T}.
	\end{align*}
\end{proposition}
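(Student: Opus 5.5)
The plan is to work directly from the definition $\mathbf{T}^* = \mathbf{T}^0 + \mathbf{T}^1 + \mathbf{T}^2 + \cdots$ given in the previous subsection. The goal is the matrix identity $\mathbf{T}^* = \mathbf{I} + \mathbf{T}^*\mathbf{T}$. Left-multiplying this identity by $\boldsymbol{\alpha}^\top$ and using distributivity in the matrix semiring gives
\begin{align*}
\mathbf{d}^\top = \boldsymbol{\alpha}^\top \mathbf{T}^* = \boldsymbol{\alpha}^\top\mathbf{I} + (\boldsymbol{\alpha}^\top\mathbf{T}^*)\mathbf{T} = \boldsymbol{\alpha}^\top + \mathbf{d}^\top\mathbf{T}.
\end{align*}
Here we also use associativity of the matrix product, since a vector is just a $1\times k$ matrix, and the fact that $\mathbf{I}$ is the multiplicative neutral element.

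The first step is to make the infinite sum meaningful. In a general semiring, $\mathbf{T}^*$ need not exist. I would state the result under the hypothesis that $\mathbf{T}^*$ is well defined. The case used in the paper needs no extra assumption: $\mathbf{T}$ comes from an acyclic, topologically sorted automaton, so it is strictly upper triangular. Then $\mathbf{T}^n = \mathbf{0}$ for $n \ge k$, because every term of $(\mathbf{T}^n)_{ij}$ is a product $T_{i i_1}\otimes T_{i_1 i_2}\otimes\cdots\otimes T_{i_{n-1} j}$ along a strictly increasing index chain of length $n$. No such chain exists among $k$ indices, so each term contains a factor $\bar 0$, and $\bar 0$ is absorbing. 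Hence $\mathbf{T}^* = \sum_{n=0}^{k-1}\mathbf{T}^n$ is a finite sum.

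With a finite sum, the identity follows from distributivity and reindexing:
\begin{align*}
\mathbf{I} + \Big(\sum_{n=0}^{k-1}\mathbf{T}^n\Big)\mathbf{T} = \mathbf{I} + \sum_{n=1}^{k}\mathbf{T}^n = \sum_{n=0}^{k-1}\mathbf{T}^n + \mathbf{T}^k = \mathbf{T}^*,
\end{align*}
where the last step uses $\mathbf{T}^k = \mathbf{0}$. The argument also needs $\mathbf{T}^{n}\mathbf{T} = \mathbf{T}^{n+1}$, which holds because the matrix semiring product is associative, and commutativity of $+$ to reorder terms.

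For the general statement, adapted from \citet{Esik2009}, I would instead assume a complete (or $k$-closed) semiring. There, infinite sums are associative and commutative and distribute over $\otimes$, and the same shift-of-index argument goes through term by term. The main obstacle is precisely this justification of infinite distributivity and rearrangement. I would sidestep it by invoking the nilpotent, acyclic case, which is all that \eqref{eq:sdistance} needs.
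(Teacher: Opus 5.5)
Your proposal is correct and follows the paper's proof. Both distribute $\mathbf{T}$ over the series and shift the index to get $\boldsymbol{\alpha}^\top + (\boldsymbol{\alpha}^\top\mathbf{T}^*)\mathbf{T} = \boldsymbol{\alpha}^\top(\mathbf{I} + \mathbf{T}^1 + \mathbf{T}^2 + \cdots)$, but the paper manipulates the infinite sum formally, whereas you make it a finite sum via nilpotency of the strictly upper triangular $\mathbf{T}$ and rightly note that general $\mathbf{T}$ requires a complete (or closed) semiring for the rearrangement to be legitimate.
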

\begin{proof}
    Plugging $\mathbf{d}^\top = \boldsymbol{\alpha}^\top \mathbf{T}^*$ on the right-hand side of the equality gives:
    \begin{align*}
        \mathbf{d}^\top &= \boldsymbol{\alpha}^\top + (\boldsymbol{\alpha}^\top \mathbf{T}^*) \mathbf{T} \\
        &= \boldsymbol{\alpha}^\top \Big( \mathbf{I} + \mathbf{T}^1 + \mathbf{T}^2 + ...)
    \end{align*}
    and noting that, by definition, $\mathbf{T}^0 = \mathbf{I}$, completes the proof.
\end{proof}
%
From Proposition \ref{prop:LIN_ED_SOLUTION} and the fact that $\mathbf{T}$ is strictly upper triangular, it follows that the $i$th dimension of $\mathbf{d}$ is given by:
\begin{align}
    \label{eq:RECURSION} d_i = \alpha_i \oplus \Big( \bigoplus_{j=1}^{i-1} d_j \otimes T_{ji} \Big).
\end{align}
Since $d_i$ depends only on other values $d_j$ when $j < i$,  it can be evaluated recursively starting from $d_1 = \alpha_1$.

Since each element $T_{ij}$ will be multiplied and added exactly one times, and taking into account the addition of the vector $\boldsymbol{\alpha}$, computing $\mathbf{d}$ requires $O(|Q| + |E|)$ semiring operations. This matches the complexity of other DP algorithms on topologically sorted directed acyclic graphs \citep{Mohri2002Semiring}. On an ideal parallel system, the cost of adding $\boldsymbol{\alpha}$ becomes constant and the sums in \eqref{eq:SUM_TRANS} and in \eqref{eq:RECURSION} can be distributed logarithmically which leads to an algorithmic depth of $O(|Q| \log \sigma)$, where $\sigma$ is the maximum number of transitions leaving a state in $\mathcal{A}$.

\subsection{Differentiation with the morphism-trick}

We now turn to the problem of propagating back the derivatives though the three steps of the computation, namely \eqref{eq:SUM_TRANS}, \eqref{eq:sdistance}, and the dot product $\mathbf{d}^\top \boldsymbol{\omega}$.
The main challenge consists in computing the derivatives of $\nu(\mathcal{A})$ with respect to $\mathbf{d}$ from which the derivatives of $\boldsymbol{\alpha}$, $\boldsymbol{\omega}$ and $\mathbf{T}$ naturally follows.
First, observe that irrespective of the semiring we have the recursion:
\begin{align}
    \label{eq:gradient_sd} \frac{\partial \nu(\mathcal{A})}{\partial d_i} &= \frac{\partial \mathbf{d}^\top \boldsymbol{\omega}}{\partial d_i} + \sum_{j=i+1}^Q \frac{\partial d_j}{\partial d_i} \frac{\partial \nu(\mathcal{A})}{\partial d_j},
\end{align}
which requires computing the derivatives of a semiring dot product $\frac{\partial \mathbf{d}^\top \boldsymbol{\omega}}{\partial d_i}$, which we have already discussed, and the term $\frac{\partial d_j}{\partial dj}$.
Since \eqref{eq:RECURSION} has the same functional form as \eqref{eq:dot}, we can use the morphism trick, i.e. differentiating through $\mu^{-1}[\mu(d_i)]$, which yields:
\begin{align}
    \frac{\partial d_i}{\partial d_j} &= \frac{\partial d_i}{\partial \mu(d_i)} \frac{\partial \mu(d_j \otimes T_{ij})}{\partial d_j \otimes T_{ij}} \frac{\partial d_j \otimes T_{ij}}{\partial d_j}.
\end{align}
As in the case of the dot-product, the gradient depends on the semiring only through terms with the same form as \eqref{eq:generic_drule1} which can be provided externally by the practitioner allowing therefore a general implementation of the vector-Jacobian product of $\nu(\mathcal{A})$.
The pseudo-code for such implementation is provided in Alg.~\ref{alg:weightrule} (Appendix~\ref{app:algorithms}) as an example.
The comparison of run-time of the AD of $\nu(\mathcal{A})$ with Enzyme on the one hand and Zygote augmented with our Alg.~\ref{alg:weightrule}\footnote{The results for Zygote alone are not shown as they were significantly worse and impractical to measure for non-trivial size automata.} on the other hand is reported in
Fig.~\ref{fig:weight_forward_backward}.
As for the case of the semiring dot product, the speed up is drastic thanks to a minimal number of dynamic memory allocations.


The DP algorithm defined by the recursion in \eqref{eq:RECURSION} is made available in the open-source WFSA library TensorAutomata.jl\footnote{\url{https://gitlab.lisn.upsaclay.fr/PTAL/Automata/TensorAutomata.jl}}.
Our implementation makes uses of the sparse row compressed storage format \citep{Tinney1967} for the matrix $\mathbf{T}$, enabling large scale application.
The library also implements the corresponding general vector-Jacobian product via the ChainRules.jl library\footnote{\url{https://github.com/JuliaDiff/ChainRules.jl}}.
Therefore, our algorithm is differentiable \emph{as is} for all AD engines compatible with ChainRules.jl.
In practice, we have used our software to differentiate through the DP recursion on graphs with millions of states and hundreds of millions of transitions.
Computation on GPU is currently not supported but planned for future releases. Illustration of gradients computed with TensorAutomata.jl are shown in Appendix~\ref{app:gradient_examples}.


\section{Conclusion}


This work proposes a principled method to address the implementation of general vector-Jacobian products for DP algorithms and semiring-based computations in general.
To avoid reimplementing these routines for each possible semiring, it is assumed that the $\oplus$-summations in a chosen semiring can be morphed into a regular summation, allowing simplification in the chain of derivatives. 
We coined our method the ``morphism-trick''.
As a consequence, the backpropagation graph can be ``flattened'' saving many dynamic memory allocations, a major performance bottleneck in AD systems.
To integrate a new semiring into the framework it is sufficient for the end-user to provide implementation of three elementary derivatives listed in~\eqref{eq:generic_drule1}.
Experimental results show that augmenting AD systems with semiring-agnostic implementation of vector-Jacobian products brings several order of magnitude speed up compared to state-of-the-art AD systems.
Our approach is applicable to semirings representing a smooth deformation of the semiring of real/complex such as the log-semiring or the log$_\kappa$ semiring.
It is also applicable to some idempotent semirings (e.g. the tropical semiring) but requires in this case to augment the semiring with an extra field to evaluate quantities necessary for the backpropagation.

\section{Acknowledgement}

This project was partially funded by the French Government Defense Innovation Agency (convention études et de recherche no. 2022 65 0079).
It has also received funding from the European Union’s Horizon 2020 research and innovation programme under the Marie Skłodowska-Curie grant agreement No 101007666.
Caio Corro was supported by the SEMIAMOR (CE23-2023-0005) and InExtenso (ANR-23-IAS1-0004) project grants given by the French National Research Agency (ANR).
We also acknowledge support from the JSALT 2023 workshop, hosted at Le Mans University, France, and sponsored by Johns Hopkins University.

Finally, we would like to warmly thank Matthew Wiesner for his careful proofreading and his insightful comments on the manuscript.

\bibliography{refs}
\bibliographystyle{icml2026}

\appendix

\label{app:derivative}

	

\section{Reverse-Mode differentiation}
\label{app:tree}
A illustration of the reverse-model differentiation for the semiring dot-product is shown in Fig.~\ref{fig:forward_backward}.

\begin{algorithm}
    \caption{
        Pseudo-code of a vector-Jacobian product for the weight of an automata $\mathcal{A}$. $\mathbf{T}, \boldsymbol{\alpha}, \boldsymbol{\omega}$ are the input and $z$ is the evaluated output.
        $\Delta z$ is the derivative backpropagated by the AD system.
    }
    \label{alg:weightrule}
    \begin{algorithmic}
        \STATE $\nabla \mathbf{d} \gets \mathbf{0}$
        \FOR{$i$ in [$Q$, ..., $1$]}
            \STATE $u_i = d_i \otimes \omega_i$ \COMMENT{$u_i$ is reconstructed as it was not stored during the forward step}
            \STATE $\nabla d_i \gets \Delta z \cdot \frac{\partial z}{\partial \mu(z)}\frac{\partial \mu(u_i)}{\partial u_i} \cdot \frac{\partial u_i}{\partial d_i}$
            \FOR{$j$ in [$i+1$, ..., $Q$]}
                \STATE $\frac{\partial d_i}{\partial d_j} \gets \frac{\partial d_i}{\partial \mu(d_i)} \frac{\partial \mu(d_j \otimes T_{ij})}{\partial d_j \otimes T_{ij}} \frac{\partial d_j \otimes T_{ij}}{\partial d_j}$
                \STATE $\nabla d_j \gets \nabla d_j + \cdot \frac{\partial d_i}{\partial d_j} \cdot \nabla d_j$
            \ENDFOR
        \ENDFOR
        \STATE \RETURN $\nabla \mathbf{d}$
    \end{algorithmic}
\end{algorithm}

\begin{algorithm}
	\caption{
		Pseudo-code of the vector-Jacobian product for the semiring dot product operation.
		$\mathbf{x}, \mathbf{y}$ are the input of the function and $z$ the output.
		$\Delta z$ is the derivative backpropagated by the AD system.
		The operations $\otimes$, $\frac{\partial z}{\partial \mu(z)} \frac{\mu(u_i)}{u_i}$, $\frac{\partial u_i}{x_i}$, and $\frac{\partial u_i}{y_i}$ are external to the AD system and are provided by the practitioner.
	}
	\label{alg:dotrule}
	\begin{algorithmic}
		\STATE $\Delta\mathbf{x} \gets \mathbf{0}$
		\STATE $\Delta\mathbf{y} \gets \mathbf{0}$
		\FOR{$i \in \{1, ..., K\}$}
		\STATE $u_i = x_i \otimes y_i$ \COMMENT{$u_i$ is reconstructed as it was not stored during the forward step}
		
		\STATE $\Delta x_i \gets \Delta z \cdot \frac{\partial z}{\partial \mu(z)}\frac{\partial \mu(u_i)}{\partial x_i} \cdot \frac{\partial u_i}{\partial x_i}$
		\STATE $\Delta y_i \gets \Delta z \cdot \frac{\partial z}{\partial \mu(z)}\frac{\partial \mu(u_i)}{\partial y_i} \cdot \frac{\partial u_i}{\partial x_i}$
		\ENDFOR
		\RETURN $\Delta \mathbf{x}$, $\Delta \mathbf{y}$
	\end{algorithmic}
\end{algorithm}

\section{vector-Jacobian product pseudo-algorithms}
\label{app:algorithms}
\begin{itemize}
    \item Pseudo-Algorithm~\ref{alg:dotrule} for the vector-Jacobian product of the semiring dot-product.
    \item Pseudo-Algorithm~\ref{alg:weightrule} for the vector-Jacobian product of $\nu(\mathcal{A})$.
\end{itemize}

\section{Examples of gradient of $\nu(\mathcal{A})$}
\label{app:gradient_examples}

\begin{itemize}
    \item Figure~\ref{fig:ad_log} shows examples of gradients of $\nu(\mathcal{A})$ for a finite-state automaton under the $\log$-semiring and the $\log_\kappa$-semiring.
    \item Figure~\ref{fig:ad_trop} shows examples of gradients of $\nu(\mathcal{A})$ for a finite-state automaton under the tropical-semiring and the arctic-semiring.
    \item Figure~\ref{fig:ad_exp} shows an example of gradient for a multi-valued semiring.
\end{itemize}

\begin{figure}
    \centering
    \begin{subfigure}{0.35\textwidth}
        \centering
        \includegraphics[scale=0.35]{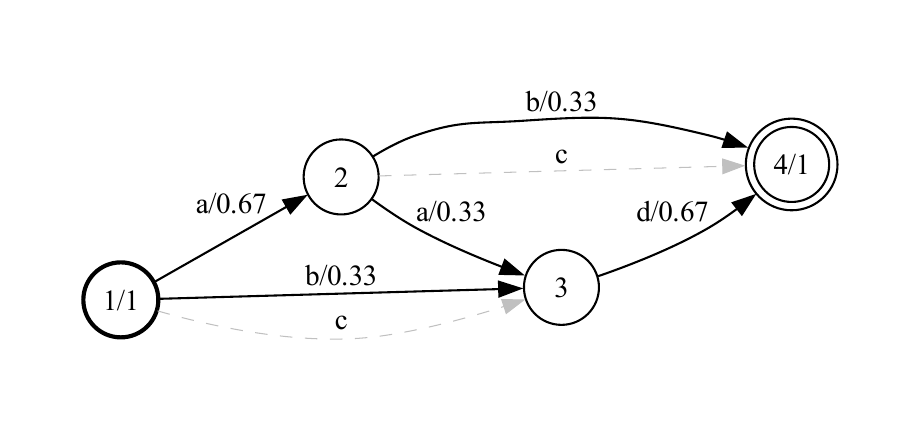}
        \caption{$\nabla \nu(\mathcal{A})$ (tropical)}
    \end{subfigure} \\
    \begin{subfigure}{0.5\textwidth}
        \centering
        \includegraphics[scale=0.35]{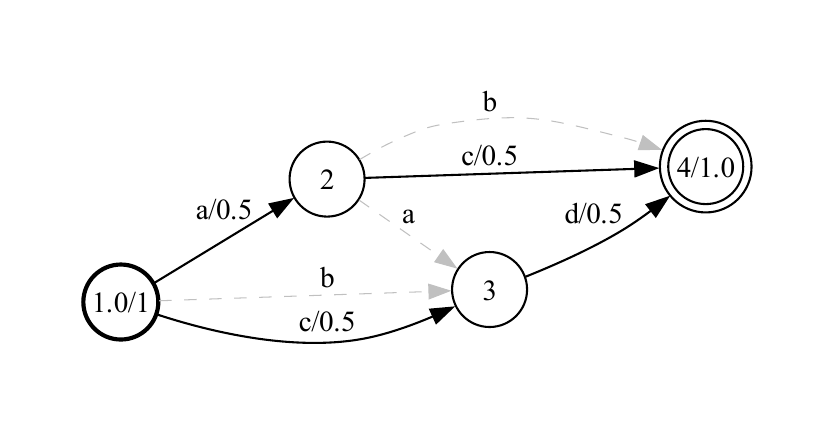}
        \caption{$\nabla \nu(\mathcal{A})$ (arctic)}
    \end{subfigure}
    \caption{
        Examples of gradient of $\nu(\mathcal{A})$ (see $\mathcal{A}$ in Fig.
        \ref{fig:ad_log}) under the tropical and arctic semirings.
        Dashed gray lines indicates zero-value partial derivatives.
    }
    \label{fig:ad_trop}
\end{figure}

\begin{figure}
    \centering
    \begin{subfigure}{0.5\textwidth}
        \centering
        \includegraphics[scale=0.35]{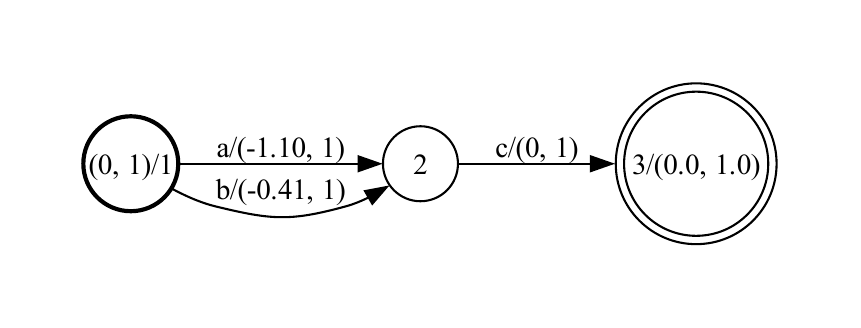}
        \caption{$\mathcal{B}$}
        \label{fig:B}
    \end{subfigure} \\
    \begin{subfigure}{0.5\textwidth}
        \centering
        \includegraphics[scale=0.35]{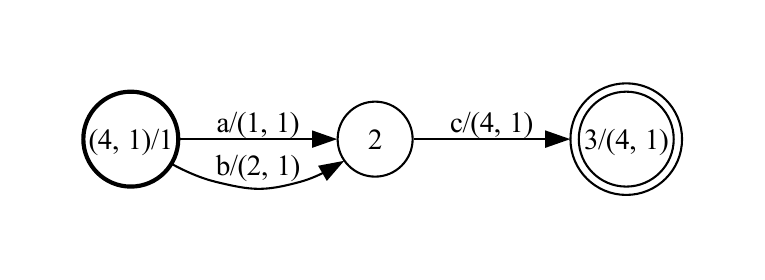}
        \caption{$\nabla \nu(\mathcal{B})$ (log-expectation)}
        \label{fig:B_grad}
    \end{subfigure}
    \caption{
        (a) finite state automaton $\mathcal{B}$ over the log-expectation semiring: $(\mathbb{R} \times \mathbb{R}, \oplus, \otimes, (-\infty, 0), (0, 0))$ where $(x, a) \oplus (y, b) = (\log(e^x + e^y), a + b)$ and $(x, a) \otimes (y, b) = (x + y, e^x b + e^y a)$.
        (b) Gradient of $z$ where $\nu(\mathcal{B}) = (z, c)$ with respect to the parameters of $\mathcal{B}$.
    }
    \label{fig:ad_exp}
\end{figure}

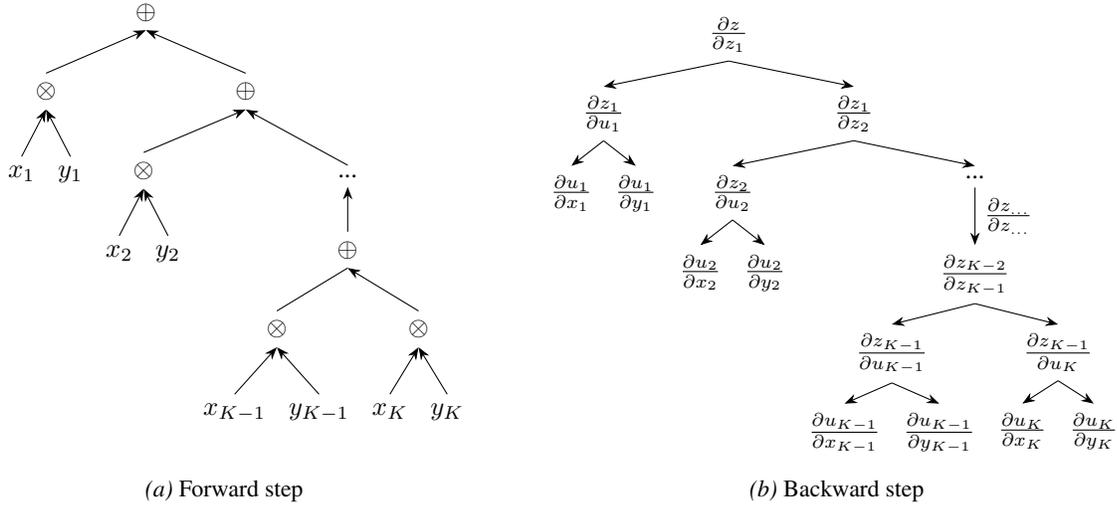
\begin{figure*}[t]
	\centering
	\begin{subfigure}{0.35\textwidth}
		\centering
		\begin{tikzpicture}
\Tree[.$\oplus$
\edge[{Stealth}-];
[.{$\otimes$}
\edge[{Stealth}-];
$x_1$
\edge[{Stealth}-];
$y_1$ ]
\edge[{Stealth}-];
[.$\oplus$
\edge[{Stealth}-];
[.{$\otimes$}
\edge[{Stealth}-];
$x_2$
\edge[{Stealth}-];
$y_2$
]
\edge[{Stealth}-];
[....
\edge[{Stealth}-];
[.$\oplus$
[.$\otimes$
\edge[{Stealth}-];
$x_{K-1}$
\edge[{Stealth}-];
$y_{K-1}$
]
\edge[{Stealth}-];
[.$\otimes$
\edge[{Stealth}-];
$x_K$
\edge[{Stealth}-];
$y_K$
]
]
]
]
]
\end{tikzpicture}
		\caption{Forward step}
		\label{fig:forward}
	\end{subfigure}%
	\begin{subfigure}{0.6\textwidth}
		\centering
		\begin{tikzpicture}
	\tikzset{every tree node/.style={align=center,anchor=north}}
	\Tree[.$\frac{\partial z}{\partial z_1}$
	\edge[-{Stealth}];
	[.$\frac{\partial z_1}{\partial u_1}$
	\edge[-{Stealth}]; 
	$\frac{\partial u_1}{\partial x_1}$
	\edge[-{Stealth}];
	$\frac{\partial u_1}{\partial y_1}$ ]
	\edge[-{Stealth}]; 
	[.$\frac{\partial z_1}{\partial z_2}$
	\edge[-{Stealth}]; 
	[.$\frac{\partial z_2}{\partial u_2}$
	\edge[-{Stealth}]; 
	$\frac{\partial u_2}{\partial x_2}$
	\edge[-{Stealth}];
	$\frac{\partial u_2}{\partial y_2}$
	]
	\edge[-{Stealth}]; 
	[....
	\edge[-{Stealth}] node[auto=left] {$\frac{\partial z_{...}}{\partial z_{...}}$};
	[.$\frac{\partial z_{K-2}}{\partial z_{K-1}}$
	\edge[-{Stealth}]; 
	[.$\frac{\partial z_{K-1}}{\partial u_{K-1}}$
	\edge[-{Stealth}]; 
	$\frac{\partial u_{K-1}}{\partial x_{K-1}}$
	\edge[-{Stealth}];
	$\frac{\partial u_{K-1}}{\partial y_{K-1}}$
	]
	\edge[-{Stealth}]; 
	[.$\frac{\partial z_{K-1}}{\partial u_{K}}$
	\edge[-{Stealth}]; 
	$\frac{\partial u_{K}}{\partial x_{K}}$
	\edge[-{Stealth}];
	$\frac{\partial u_{K}}{\partial y_{K}}$
	]
	]
	]
	]
	]
\end{tikzpicture}
		\caption{Backward step}
		\label{fig:backward}
	\end{subfigure}
	
	\caption{
		Illustration of reverse-mode automatic differentiation for the semiring-based dot product $z = x_1 \otimes y_1 \oplus ... \oplus x_k \otimes y_k$.
		The terms $u_i$ are placeholders for the intermediate $\otimes$-products $x_i \otimes y_i$.
		First $z$ is evaluated for a given $\mathbf{x}$ and $\mathbf{y}$ (Fig. \ref{fig:forward}) and then, the partial derivatives are backpropagated via iterated vector-Jacobian products from the root to the leaves of the computation graph (Fig. \ref{fig:backward}).
		The whole process requires building and maintaining the computation graph in memory between the forward and backward computation.
	}
	\label{fig:forward_backward}
\end{figure*}

\begin{figure*}[t]
    \centering
    \begin{subfigure}{0.5\textwidth}
        \centering
        \includegraphics[scale=0.35]{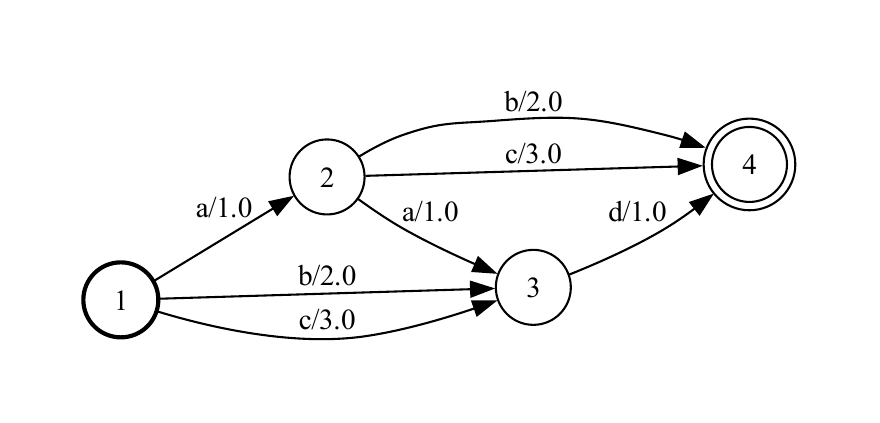}
        \caption{$\mathcal{A}$}
        \label{fig:example_automaton}
    \end{subfigure}%
    \begin{subfigure}{0.5\textwidth}
        \centering
        \includegraphics[scale=0.35]{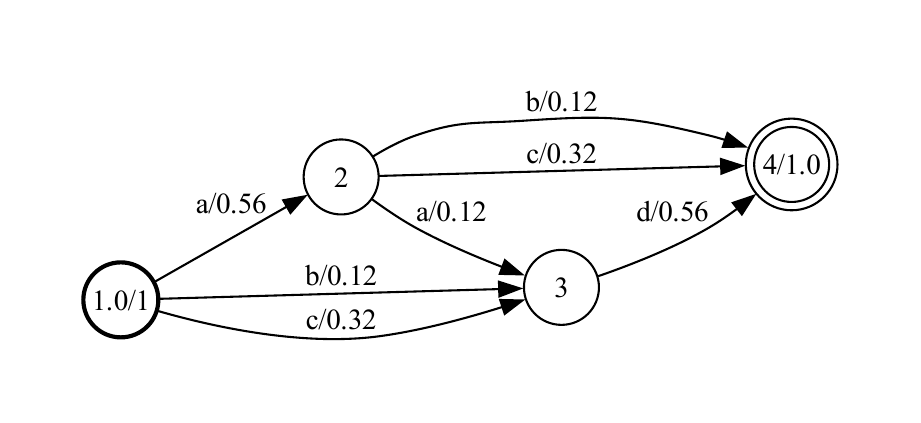}
        \caption{$\nabla \nu(\mathcal{A})$ log-semiring}
    \end{subfigure}\\
    \begin{subfigure}{0.4\textwidth}
        \centering
        \includegraphics[scale=0.35]{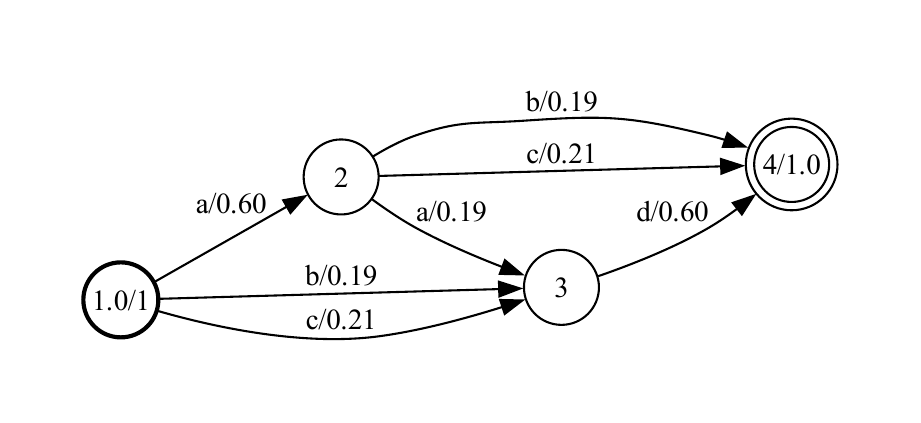}
        \caption{$\nabla \nu(\mathcal{A})$ log-semiring with $\tau = 10^{-1}$}
    \end{subfigure}%
    \begin{subfigure}{0.4\textwidth}
        \centering
        \includegraphics[scale=0.35]{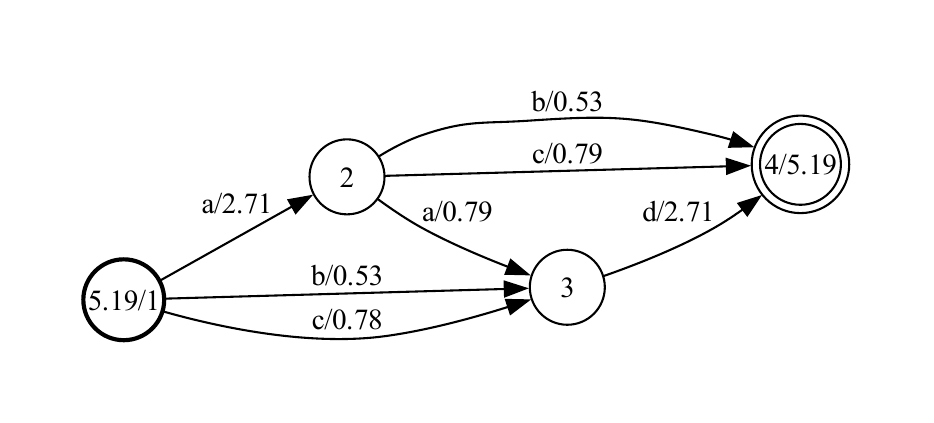}
        \caption{$\nabla \nu(\mathcal{A})$ log$_\kappa$-semiring with $\kappa~=~0.5$}
    \end{subfigure}
    \caption{
        Examples of gradient of $\nu(\mathcal{A})$ evaluated under different semirings with respect to the parameters of $\mathcal{A}$.
        A number $x$ displayed as ``$x/q$'' (resp. ``$q/x$'') for a state circle with number $q$ is the partial derivatives of $\nu(\mathcal{A})$ with respect to the initial (resp. final) weight of the state $q$.
    }
    \label{fig:ad_log}
\end{figure*}

\end{document}